\newtheorem{definition}{Definition}
\definecolor{english}{rgb}{0.0, 0.5, 0.0}
\begin{document}
	
\title{Learning to Mitigate AI Collusion on Economic Platforms\footnote{Author order is alphabetical. This research is funded in part by Defense Advanced Research Projects Agency under Cooperative Agreement HR00111920029. The content of the information does not necessarily reflect the position or the policy of the Government, and no official endorsement should be inferred. This is approved for public release; distribution is unlimited. The work of G. Brero was also supported by the SNSF (Swiss National Science Foundation) under Fellowship P2ZHP1\_191253. We thank Emilio Calvano and Justin Johnson for their availability to answer questions about their work and for guidance in replicating some of their results. We also thank Alon Eden, Matthias Gerstgrasser, and Alexander MacKay for for helpful discussions and feedback.}}

\author[1]{Gianluca Brero}
\author[2]{Nicolas Lepore}
\author[1]{Eric Mibuari}
\author[1]{David C. Parkes}
\affil[1,2]{John A. Paulson School of Engineering and Applied Sciences, Harvard University}
\affil[1]{\small \texttt {\{gbrero,mibuari,parkes\}@g.harvard.edu}}
\affil[2]{\small \texttt {nlepore33@gmail.com}}

\date{\today}

\begin{titlepage}

\maketitle

\begin{abstract}
	{Algorithmic pricing on online e-commerce platforms raises the concern of tacit collusion, where reinforcement learning algorithms learn to set collusive prices in a decentralized manner and through nothing more than profit feedback. This raises the question as to whether collusive pricing can be prevented through the design of suitable ``buy boxes," i.e., through the design of the rules that govern the elements of e-commerce sites that promote particular products and prices to consumers. 
	In this paper, we demonstrate that reinforcement learning (RL) can also be used by platforms to learn buy box rules that are effective in preventing collusion by RL sellers.
	For this, we adopt the methodology of {\em Stackelberg POMDPs}, and demonstrate success in learning robust rules that continue to provide high consumer welfare together with sellers employing different behavior models or having out-of-distribution costs for goods.}
\end{abstract}

\end{titlepage}

\section{Introduction}

The last decade has witnessed a dramatic shift of trading from retailers to online e-commerce platforms such as Amazon and Alibaba. In these platforms, sellers are increasingly using algorithms to set prices.  Algorithmic pricing can be beneficial for market efficiency, enabling sellers to quickly react to market changes and also in enabling price competition. At the same time,  the U.S.~Federal Trade Commission (FTC) \cite{FTC} and  European Commission \citep{OECD} have raised 
concerns that algorithmic pricing may facilitate collusive behaviors. 
\citet{calvano2020artificial} support these concerns through a study of pricing agents in a simulated platform economy, and show  that commonly used reinforcement-learning (RL) algorithms  learn to initiate and sustain collusive behaviors. 
\citet{assad2020algorithmic} also provide empirical support for algorithmic collusion in a study of  Germany’s retail gas stations, showing an association between  algorithmic pricing and an increase in price markups. As highlighted by~\citet{calvano2020protecting}, these kinds of collusive behaviors are unlikely to be a violation of antitrust law, as they are learned responses to profit signals and not the result of explicit agreements.

One can try to prevent algorithmic collusion by introducing suitable rules by which platforms can choose which sellers to promote to buyers, thus promoting competition. 
Could Amazon's ``buy box algorithm," for example, play this role in the future, in determining for  a given consumer search which products and prices to highlight to a consumer?
Responding to this,~\citet{johnson2020platform} design  hand-crafted rules that succeed in  hindering collusion between RL algorithms. At the same time, their rules introduce the undesirable effect of limiting consumers to a single seller, and there remains  potential for more effective interventions.

In this paper, we demonstrate for the first time  how   RL  can also  be used  defensively by a platform 
to automatically design  rules that promote consumer welfare and prevent collusive pricing.  This is a problem of multi-agent learning, with the interaction between the platform and sellers modeled as a \textit{Stackelberg game} \citep[]{fudenberg1991game}. The  leader is the platform designer and sets the platform rules and the sellers respond,  using RL to set prices given these rules.
We introduce the class of \textit{threshold platform rules}, and formally show  that this class contains rules that approximately maximize consumer surplus in a unique subgame perfect equilbrium.
{At the same time, this class  of threshold rules is fragile to unexpected deviations by sellers, for example caused by cost perturbations. The role of RL on the part of the platform is to learn rules with similar performance that are also more robust.} 

To solve the Stackelberg problem, we make use of the  {\em Stackelberg partially observable Markov decision process} (POMDP) framework~\citep{brerolearning}, which defines an episode structure of a POMDP such that the RL algorithm representing the leader will learn to optimize its reward (here, consumer surplus) given that its rules cause re-equilibration on the part of the followers (here, the sellers who  use Q-learning algorithms to set prices). {The Stackelberg POMDP framework is well formed as long as the re-equilibration behavior of sellers can be modeled through Markovian dynamics, as is the case with Q-learning.}

We show successful results in learning effective platform policies that outperform handcrafted rules~\citep{johnson2020platform}. This demonstrates how the Stackelberg POMDP framework can be successfully applied in settings where followers play repeated games, and their strategies are also policies trained via reinforcement learning algorithms. We then show how our threshold platform rules allow us to obtain a similar learning performance when training the platform policy ``in the wild,'' i.e., without accessing the sellers' private information. With this, we demonstrate how the Stackelberg POMDP framework can be applied in more general learning scenarios than the offline learning ones for which it was originally designed. Finally, we show how the platform rules learned via our Stackelberg POMDP framework continue to be effective when market conditions change, for example as the result of a change to the cost structure of sellers.

\paragraph{Further related work.}  
\citet{zheng2022,tang2017reinforcement,shen2020reinforcement,brero2021reinforcement}  make use of  RL to optimize different economic systems (including matching markets, internet advertising, tax policies, and auctions) under strategic agents' responses. Unlike our work, these methods do not leverage the designer's commitment power or  the Stackelberg structure of the induced game.~\citet{brerolearning}   introduce  and study the  Stackelberg POMDP framework for a very different setting than that of the present paper: the design of sequential price auctions.\footnote{The only other method we know for Stackelberg learning  in stochastic games with provably guarantees
solves for a single follower~\citep{MishraVV20}; see also~\citet{MguniJSMCC19,cheng17,ShiYWWZLA20} and~\citet{ShuT19}, and~\citet{thara07}  for a partial convergence result for a static game with two followers.
For other convergence results for single-follower, static, and often zero-sum games see~\citet{li_robust_2019, sengupta_multi-agent_2020,xu2021robust,FiezCR20,JinNJ20}. 
For multi-follower static games,~\citet{wang22} make  use of a differentiable relaxation of follower best-response behavior together with a subroutine to solve an optimization problem for follower behavior.}

\citet{abada2020artificial} study collusion by RL pricing in markets for electric power, and  use machine learning by a regulator agent for the mitigation of collusion, albeit without a Stackelberg framing (and without success, leading to lower welfare than the collusive outcome).
The broader research program on {\em differentiable economics}  uses representation learning for optimal economic design~\citep{DuettingFNPR19,STZ18,kuo20,tacchetti19,rahme21,curry22,rahmeJBW21, curry20, peri21}; this work  avoids the need for Stackelberg design by emphasizing the use of direct, incentive-compatible mechanisms. 
Also related is {\em empirical mechanism design}~\citep{ViqueiraCMG19,VorobeychikKW06,BrinkmanW17}, which applies
empirical game theory to search for the equilibria of mechanisms with a  set of candidate strategies~\citep{Wellman06,KiekintveldW08,JordanSW10}; see also~\citet{bunz2020designing} for the design of iterative auctions.

\section{Preliminaries}
\label{sec:preliminaries}

\paragraph{Seller Competition Model.}
\label{sec:econ_model}
There is a set of sellers $\mathcal N=\{1,\ldots, n\}$, each of whom sells a {differentiated} product on an economic platform. Each seller has the same (private) {\em marginal cost} $c>0$ for producing one unit of its product. Sellers interact with each other repeatedly over time in setting prices and selling goods. At each time period, $t=0,1,\dots$, each seller $i$ observes all past prices, and sets a {\em price} $p_{i,t}\ge 0$. We let $p_t=(p_{1,t},\ldots,p_{n,t})$ denote a generic price profile quoted at time $t$. 
The platform sets the rules of a buy box that govern, in each period $t$, which set $\mathcal N_t\subseteq \mathcal N$ of   sellers  are available.   Consumers  can only buy from these sellers and others  forfeit sales. There is also an outside option, indexed by $0$, which provides each consumer with a fallback choice with zero utility.

Following~\citet{johnson2020platform},  competition between sellers for consumer demand is modeled through the standard {\em logit model} of consumer choice.
For this, seller $i$ has {\em quality index} $\alpha_i>0$, this {providing horizontal differentiation} across products, and the outside good has quality index $\alpha_0>0$.
%
In the logit model, each  consumer samples  $\zeta_0, \zeta_1, ... \zeta_n$, independently from a type I extreme value distribution with {\em scale parameter} $\mu>0$, for each  product and the outside option, with utility $\alpha_i + \zeta_i - p_{i,t}$ for product $i$, and $\alpha_0 + \zeta_0$ for the outside option. Considering
a unit mass of consumers in period $t$, seller $i \in \mathcal N_t$ receives fractional demand  $D_i(p_t; \mathcal{N}_t) = 	\mathrm{exp}((\alpha_i-p_{i,t})/\mu)/ \lambda(p_t;\mathcal N_t)$, where $\lambda(p_t;\mathcal N_t) = \sum_{j\in \mathcal N_t} \mathrm{exp}((\alpha_j-p_{j,t})/\mu) +  \mathrm{exp}(\alpha_0/\mu)$, and any seller $i\notin \mathcal N_t$ has zero demand. Scale parameter $\mu>0$ serves to control the extent of horizontal differentiation, with no differentiation and perfect substitutes  obtained  as $\mu \to 0$. The total {\em  consumer surplus}  is $U(p_t; \mathcal{N}_t) = \mu \cdot \log[\lambda(p_t;\mathcal N_t)]$, and is maximized with minimum prices and  all sellers  displayed (so consumers have a full choice of products). 
Seller $i$'s {\em profit} $\rho_i$ in period $t$ is  $\rho_{i}(p_t; \mathcal{N}_t)=(p_{i,t} - c)\cdot D_i(p_t; \mathcal{N}_t)$, and its per-unit profit multiplied by demand.

\paragraph{Reinforcement learning by sellers.}\label{sec:MDPs} 
{In a single-agent Markov decision process (MDP)}, an agent {faces a sequential decision problem under uncertainty}. At each step $t$, the agent observes a state variable $s_t\in S$ and chooses an action $a_t \in A$. Upon action $a_t$ in state $s_t$, the agent obtains  reward $r(s_t,a_t)$, and the environment moves to state $s_{t+1}$ according to 
$p(s_{t+1}|s_t,a_t)$. We let $\tau=(s_0,a_0,...,s_T,a_T)$ denote a state-action trajectory determined by executing policy  policy $\pi:S\to A$, and $p_\pi(\tau)$ denote the probability of trajectory $\tau$.
The optimal policy $\pi^*$ solves 
$$ \pi^*\in \text{argmax}_\pi E_{\tau\sim p_\pi(\tau)} \left[\sum_{t=0}^T \delta^t r(s_t, a_t)\right]\!,$$
where $\delta\in[0,1]$ is the discount factor and time-horizon $T$ can be finite or infinite. In 
a {\em partially-observable MDP (POMDP)},  the policy $\pi$ cannot access state $s_t$   but only observation $o_t$ sampled from  $p(o_t|s_t)$. 
%
A {\em multi-agent MDP}~\citep{boutilier1996planning} for $n$ agents has states $S$ common to all agents and a set of actions $A_i$ for each agent $i$. When each agent $i$ picks action $a_{i,t}$ in state $s_t$, the environment moves to state $s_{t+1}$ according to a distribution $p(s_{t+1}|s_{t}, a_{1,t},..,a_{n,t})$ and  agent $i$ obtains a reward $r_i(s_t, a_{t})$ that depends on the joint action.
We follow \citet{calvano2020artificial} and \citet{johnson2020platform} and  adopt decentralized Q-learning by sellers   as a positive theory for  sellers in regard to their behavior in setting prices on an e-commerce platform (see Appendix~\ref{app:Qlearning}). Although Q-learners may not converge,  we also confirm these earlier  studies in showing convergence in our simulations (defined over a particular time horizon as detailed by~\citet{johnson2020platform}). 

\section{The Platform Stackelberg Problem}
\label{sec:prob_description}

To formalize the problem facing the platform in mitigating collusive behavior by sellers, we  model the interaction between the platform, which sets the rules of the buy box, and  the sellers as a {\em Stackelberg game}.
The platform designer is the leader, and fixes the platform rules. The sellers are the followers, and  play an infinitely repeated game according to these rules. As discussed above, and following~\citet{calvano2020artificial} and~\citet{johnson2020platform}, we model the sellers' behavior through decentralized Q-learning. As a result, the problem facing the platform is a  {\em behavioral Stackelberg problem}, in that the followers are modeled as  Q-learners (and need not, necessarily, be playing an equilibrium of the induced game). 

\paragraph{The sellers.}
In this model, we fix the states that comprise the MDP of a seller to include the  prices set by all sellers in the last period, i.e., $s_t = p_{t-1}$.
We initialize $s_0$  to be a randomly selected price profile. 
The action of a seller is modeled as one of $m$ equally-spaced points in the interval ranging from just below the sellers’ cost $c$ to above the monopoly price when no buybox is used (as in \citet{johnson2020platform}).
At each step $t\geq 0$, each seller $i$ selects a  price $p_{i,t}$ and is rewarded by its per-period profit $\rho_{i}(p_t;\mathcal N_t)$, which  depends on $p_t=(p_{1,t},\ldots,p_{n,t})$ and the choice of which sellers $\mathcal N_t$ are displayed by the   platform.

\paragraph{The platform.}
To formalize the platform's problem, let $\sigma^* = (\sigma_{1}^*,..,\sigma_{n}^*)$ denote a  strategy profile selected by Q-learning on the part of sellers, in response to the platform rule, and {in the long run,} after a suitably large number of steps. We leave implicit here the dependence of the sellers' strategy profile on the platform's policy.
The platform  must decide in each period which sellers to display to consumers. 
For this, we denote the platform rule as {\em policy} $\pi$, and {we adopt for the state of the platform policy the  prices quoted by sellers in step $t$, $p_t$}, so that the platform's policy uses these prices to select a set  of agents to display, with $\mathcal N_t$ selected according to $\pi(p_t)$.
Let $p^*_{t} = \sigma^*(s_{t})$ denote a  price profile chosen under seller strategies $\sigma^*$, {i.e., in response to the platform rules,} and at {some large enough time  step} $t^*$, and let $\tau^* = (p^*_{t^*},p^*_{t^*+1},..)$ denote a trajectory of  prices forward from $t^*$.
We denote the distribution of these trajectories as $p_\pi(\tau^*)$. As above, the dependence on the platform's policy is left implicit in this notation.

The  Stackelberg problem facing the platform  is to find  a platform policy $\pi$ that maximizes consumer surplus given the effect of this policy on the induced strategy profile of sellers.
\begin{definition}[Behavioral Stackelberg Problem]\label{def:behavioralStack}
The optimal platform policy solves $\pi^* \in \text{argmax}_{\pi} \mathit{CS}(\pi)$, where $\mathit{CS}(\pi)$ is the expected sum  consumer surplus when sellers follow strategy $\sigma^*$ forward from period $t^*$, 
	\begin{equation}\label{eq:platform_obj}
	\mathit{CS}(\pi) = \mathbb{E}_{\tau^*\sim p_\pi(\tau^*)}\left [ \sum_{t=t^*}^{T^*} U(p^*_t; \pi(p^*_t)) \right ]\!\!,
	\end{equation}
    where $T^*$ is suitably chosen horizon  and $p_\pi(\tau^*)$ denotes the distribution over Q-learning induced sellers' pricing trajectories, in response to platform policy $\pi$.
\end{definition}

\section{Learning Optimal Platform Rules}
\label{sec:learning_problem}

{In this section, we solve the platform's problem, in responding to Q-learning sellers, through the {\em Stackelberg POMDP} framework~\citep{brerolearning}. This creates a suitably defined POMDP} in which the optimal policy solves the behavioral Stackelberg problem (Definition~\ref{def:behavioralStack}).
\begin{definition}[Stackelberg POMDP for platform rules]
	The {\em Stackelberg POMDP {for  platform rules}} is a finite-horizon POMDP, where each episode has the following two  phases: 
	
	%
		$\bullet$ An {\em equilibrium phase}, consisting of $n_e\geq 1$ steps. In this phase, each state $s_t$ includes the step counter $t$, the sellers' current Q-matrices, and the prices $p_t$ quoted by the agents. Observations consists of the prices quoted by the sellers ($o_t=p_t$) and policy actions determine the set of agents displayed (in their more general version, $a_t = \mathcal N_t$). State transitions are determined by Q-learning, where each seller $i$ updates its Q-matrix after being rewarded by $\rho_i(p_t;\mathcal N_t)$. The policy has zero reward in every time step ($r(s_t,a_t) = 0$, for $t\le n_e$). 
		
		%
		$\bullet$ A {\em reward phase}, consisting of $n_r\geq 1$ steps, each with the same actions, states, and observations as the equilibrium steps. The reward phase differs in two ways. First, the Q-matrices of sellers are not updated, and second, the platform policy now receives a non-zero reward, and this is set in each step to be equal to the consumer surplus in that step ($r(s_t,a_t) = U(p_t;\mathcal N_t)$, for $t>n_e$).
	%
\end{definition}

This Stackelberg POMDP formulation is an adaptation of that provided by~\citet{brerolearning}, who used it to learn sequential price mechanisms (SPMs) in the presence of communication from bidders.
Here, our stage games replace SPMs,  and the followers respond through Q-learning dynamics rather than no-regret algorithms. 
{Following~\citet{brerolearning},} we show the Stackelberg POMDP formulation is well-founded by showing that an optimal policy will also solve the Behavioral Stackelberg design problem of Definition~\ref{def:behavioralStack}. Specifically, when the number of reward steps $n_r$ is large enough and when $n_e\ge t^*$, the optimal policy, denoted $\pi^*_{n_e,n_r}$, for the Stackelberg POMDP with $n_e$ equilibrium and $n_r$ reward steps maximizes the objective in Equation~\eqref{eq:platform_obj}. 
\begin{restatable}{proposition}{propoptimalpolicy}\label{prop:optimal_policy}
	The optimal Stackelberg POMDP policy $\pi^*_{n_e,n_r}$, for an equilibrium phase with $n_e\geq 1$ steps and a reward phase with $n_r\geq 1$ steps, maximizes $\mathit{CS}(\pi)$, for seller behavior after $n_e$ steps when $n_r=T^*$.
\end{restatable}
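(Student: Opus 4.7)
The plan is to show that, for every platform policy $\pi$, the expected cumulative reward in the Stackelberg POMDP equals $\mathit{CS}(\pi)$ exactly, under the identification $t^*=n_e+1$ and $T^*=n_e+n_r$; the equivalence of the argmaxes then follows immediately.

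First, I would expand the POMDP return. Since $r(s_t,a_t)=0$ for every $t\le n_e$ by the equilibrium-phase specification, all learning-phase contributions vanish and the return for policy $\pi$ reduces to
\begin{equation*}
J(\pi) \;=\; \mathbb{E}\!\left[\sum_{t=n_e+1}^{n_e+n_r} U(p_t;\pi(p_t))\right]\!,
\end{equation*}
where the outer expectation is over the randomness of Q-learning during the equilibrium phase (and any residual stochasticity in prices thereafter). During the reward phase the Q-matrices are frozen by construction, so the prices $p_{n_e+1},p_{n_e+2},\dots$ are precisely those generated when sellers play the strategy profile $\sigma^*$ produced by $n_e$ steps of Q-learning in response to $\pi$.

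Next, I would match this to Definition~\ref{def:behavioralStack}. Because $n_e\geq t^*$ and the seller strategy at step $n_e$ coincides by hypothesis with the converged strategy $\sigma^*$ appearing in the behavioral Stackelberg objective, the reward-phase price sequence $(p_{n_e+1},\dots,p_{n_e+n_r})$ is distributed exactly according to $p_\pi(\tau^*)$. Setting $n_r=T^*$ so that the horizons agree, a relabeling of the summation index gives
\begin{equation*}
J(\pi) \;=\; \mathbb{E}_{\tau^*\sim p_\pi(\tau^*)}\!\left[\sum_{t=t^*}^{T^*} U(p^*_t;\pi(p^*_t))\right] \;=\; \mathit{CS}(\pi).
\end{equation*}
Since this equality holds for every $\pi$, any POMDP optimizer $\pi^*_{n_e,n_r}$ is also a maximizer of $\mathit{CS}(\pi)$.

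The main obstacle is really conceptual rather than computational. One must verify that the class of POMDP policies (mapping observations $o_t=p_t$ to actions $a_t=\mathcal{N}_t$) coincides with the class of platform rules $\pi$ used in Definition~\ref{def:behavioralStack}, so that optimizing over one space is equivalent to optimizing over the other. A secondary subtlety is that after $n_e$ Q-learning updates the sellers' strategy $\sigma^*$ is itself random (depending on the sample path of updates and the initial state $s_0$), so $p_\pi(\tau^*)$ must be interpreted as marginalized over this randomness; because the equilibrium-phase reward is identically zero, linearity of expectation absorbs this marginalization cleanly and preserves the equality of objectives.
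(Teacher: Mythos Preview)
Your proof is correct and follows essentially the same approach as the paper: drop the zero equilibrium-phase rewards, substitute $r(s_t,a_t)=U(p_t;\pi(p_t))$ in the reward phase, and identify the resulting expectation with $\mathit{CS}(\pi)$ under the horizon match $n_r=T^*$. Your additional remarks about the coincidence of policy classes and the marginalization over Q-learning randomness go slightly beyond what the paper makes explicit, but they are valid clarifications rather than a different argument; the only minor blemish is that your opening identification $T^*=n_e+n_r$ does not literally agree with later writing ``setting $n_r=T^*$,'' though both are just index relabelings and neither affects correctness.
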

\begin{proof}
Let $\tau \sim p_\pi(\tau) $ denote a generic trajectory determined by executing policy $\pi$ in the Stackelberg POMDP environment.
	We have 
	\begin{equation}\label{eq:reward_steps_obj}
	\pi^*_{n_e,n_r} \in \text{argmax}_\pi \mathbb{E}_{\tau\sim p_{\pi}(\tau)}\left[ \sum_{t={n_e+1}}^{n_e+n_r} r(s_t,a_t) \right],
	\end{equation}
	recognizing $r(s_t,a_t)=0$ if $t\le n_e$.
	\if 0
	, we have   
	\begin{equation}
	\pi^*_{n_e,n_r} \in \text{argmax}_\pi \mathbb{E}_{\tau\sim p_{\pi}(\tau)}\left[ \sum_{t=n_e+1}^{n_e+n_r} r(s_t,a_t) \right].
	\end{equation}
	\fi
	After replacing $r(s_t,a_t)$ with $U(p_t;\pi(p_t))$, the objective in~\eqref{eq:reward_steps_obj} is 
	\begin{equation*}
		\mathbb{E}_{\tau\sim p_{\pi}(\tau)}\left[\sum_{t=n_e+1}^{n_e+n_r} U(p_t;\pi(p_t))\right],
	\end{equation*}
	which is equal to~\eqref{eq:platform_obj} when $n_r=T^*$.
\end{proof}



{\citet{brerolearning} use the Stackelberg POMDP framework  in an ``offline'' environment, i.e., in a simulation that assumes  access, at design time, to followers' internal information. This allows them to solve their POMDP using the paradigm of {\em centralized training and decentralized execution} \citep{lowe2017multi}. The leader policy is trained via an actor-critic deep RL algorithm, and the critic network (which estimates the sum of rewards until the end of the episode) accesses the full state during training. Only the actor network, which represents the policy, is restricted to the partial-state information. 

Here, we   also study the use of the Stackelberg POMDP framework to train useful leader policies ``in the wild,'' where the learning algorithm of the platform can only access the kind of information that an economic platform would have based on observations of sellers. As we will empirically demonstrate, we can successfully relax the offline learning requirements---i.e., we operate without (1)  access to sellers' private information in regard to Q-matrices and exploration rate, and (2) requiring that the Q-matrices of sellers become frozen for the reward phase of the Stackelberg POMDP---without affecting learning performance.}\footnote{We notice this is also in line with the recent findings in \citet{fujimoto2022should} highlighting how the Bellman error minimization (for which we require environments to be Markovian) may not be a good proxy of the accuracy of the value function.}

\paragraph{Threshold platform rules.} In our experiments, we consider the class of \textit{threshold platform rules}. These  rules  use the current prices set by sellers to set a price threshold above which a seller will not be displayed, with the same threshold set for  all sellers.
%
\begin{definition}[Threshold Platform Rule]
	A {\em threshold platform rule} sets a threshold $\tau(p_t)\geq 0$, for
	each price profile $p_t$, 
	such that $\mathcal N_t = \{i\in \{1,..,n\}: p_{i,t} \le \tau(p_t)\}$, i.e., any seller whose price is no greater than the threshold is displayed to consumers.
\end{definition}

This class of threshold rules has a corresponding  optimality result: there is a threshold rule that makes {the market competitive, with all sellers  displayed and consumer surplus  maximized} in the unique subgame perfect Nash equilibrium (SPE) of the induced continuous pricing game. 
Even though the pricing behaviors that arise from Q-learning  need not converge to SPEs,
we find empirical evidence, consistent with~\citet{johnson2020platform}, that the seller learning dynamics invariably converge to this 
equilibrium. As such, this  provides useful theoretical support for adopting the family of threshold platform rules by the platform learner. 
We have the following result:
\begin{restatable}{proposition}{propnocollusion}\label{prop:no_collusion}
	For any  $\epsilon > 0$, there exists a threshold platform rule $\pi$ such that	$\mathit{CS}(\pi)\ge \mathit{CS}(\pi^*)-\sum_t\delta^t \epsilon$ under the unique subgame perfect Nash equilibrium (SPE) of any finitely-repeated continuous pricing game induced by platform rule $\pi$.
\end{restatable}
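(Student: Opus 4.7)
The plan is to exhibit a simple constant threshold rule whose induced stage game has a unique pure Nash equilibrium in which every seller is displayed at a price just above marginal cost, apply backward induction to extend this to a unique SPE of the finitely-repeated game, and then conclude by continuity of consumer surplus in prices. Concretely, fix $\epsilon' \in (0,\mu)$ (to be chosen small) and take $\pi$ to be the constant threshold rule $\tau(p_t)\equiv c + \epsilon'$, so seller $i$ is displayed iff $p_{i,t}\le c+\epsilon'$.

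The main step is to show that the induced continuous stage game has the all-at-$(c+\epsilon')$ profile as its unique pure Nash equilibrium. Existence uses the standard logit markup identity $p_i - c = \mu/(1 - D_i) \ge \mu > \epsilon'$: the unconstrained best reply strictly exceeds the threshold, so the displayed-price constraint $p_i \le c+\epsilon'$ binds and each seller's best reply is exactly $c + \epsilon'$, yielding strictly positive profit $\epsilon'\cdot D_i$. For uniqueness I would rule out two classes of alternative profiles. First, any profile with an excluded seller (some $p_{i,t} > c+\epsilon'$) is not an equilibrium: that seller earns zero but can strictly improve by deviating to the threshold, since they are then displayed either alongside others or alone, and in either case earn $\epsilon' \cdot D_i > 0$. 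Second, any profile in which all sellers are displayed but some $p_{j,t} < c+\epsilon'$ is not an equilibrium: the same markup lower bound $p_j - c \ge \mu > \epsilon'$ implies seller $j$'s constrained best reply still lies at the threshold, so $j$ strictly prefers to raise its price to $c+\epsilon'$.

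Once stage-game uniqueness is established, backward induction on the finitely-repeated continuous pricing game yields a unique SPE in which every seller charges $c + \epsilon'$ and is displayed in every period, giving $\mathit{CS}(\pi) = \sum_t \delta^t\, U(c+\epsilon',\ldots,c+\epsilon';\,\{1,\ldots,n\})$. To upper bound $\mathit{CS}(\pi^*)$, I would observe that under any rational profit-maximizing play no seller ever prices below $c$ (at the last stage, $p_i < c$ yields strictly negative per-unit profit while pricing outside the buy box yields zero, and the argument propagates by backward induction), so per-period consumer surplus cannot exceed $U(c,\ldots,c;\,\{1,\ldots,n\})$, the value of $U$ when all sellers are displayed at marginal cost. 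Hence $\mathit{CS}(\pi^*) \le \sum_t \delta^t\, U(c,\ldots,c;\,\{1,\ldots,n\})$. By continuity of $U$ in prices, for any target $\epsilon > 0$ one can then choose $\epsilon'$ small enough that $U(c,\ldots,c) - U(c+\epsilon',\ldots,c+\epsilon') \le \epsilon$, which gives $\mathit{CS}(\pi) \ge \mathit{CS}(\pi^*) - \sum_t \delta^t \epsilon$.

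The main obstacle I anticipate is the uniqueness claim for the stage Nash equilibrium, and in particular ruling out asymmetric profiles in which different subsets of sellers are excluded. The essential tool is the opponent-independent lower bound $\mu/(1-D_i) \ge \mu$ on the logit markup, which guarantees that whenever $\epsilon' < \mu$ every seller's best reply to any opposing profile lies at (not strictly below) the threshold, and that the unique way to earn positive profit is to be displayed; these two facts together force everyone to exactly $c+\epsilon'$. The remaining steps---existence at the threshold profile, backward induction for the finite horizon, and the continuity-based approximation---are routine once this uniqueness is in hand.
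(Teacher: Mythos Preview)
Your proof is correct and follows essentially the same route as the paper: a constant threshold rule $\tau\equiv c+\eta$, uniqueness of the stage-game Nash equilibrium (the paper via the logit first-order condition and concavity of $\rho_i$, you via the equivalent markup identity $p_i-c=\mu/(1-D_i)\ge\mu$), backward induction to a unique SPE of the finite repetition, and continuity of $U$ to close the $\epsilon$-gap. Your restriction $\epsilon'<\mu$ is a tidy simplification that lands you directly in the threshold-binding case (the paper keeps a general $\eta$ and splits into $\hat p\le c+\eta$ versus $\hat p>c+\eta$), and your explicit upper bound on $\mathit{CS}(\pi^*)$ via ``no seller prices below cost in SPE'' makes rigorous a step the paper leaves implicit; just be sure to state the single-peakedness/concavity of $\rho_i$ in $p_i$ when passing from ``unconstrained best reply exceeds threshold'' to ``constrained best reply equals threshold.''
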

\begin{proof}
For any $\eta>0$, we study the stage game with continuous prices induced by the threshold platform rule that  sets threshold $\tau = c+\eta$ for each price profile $p$. We show that this stage game has a unique Nash equilibrium in which each seller sets a price $p_{i}=p^*$, for some $p^*\in (c,c+\eta]$.  
Given this, we have that every seller pricing at $p^*\in (c,c+\eta]$ in every period is the unique SPE of any finitely repeated game between sellers, since this is an open-loop Nash equilibrium profile (and thus SPE by the single-deviation principle). Moreover (as it is a continuous function over price profiles) the consumer surplus comes arbitrarily close, for a small enough $\eta>0$, to the maximum consumer surplus, which coincides with sellers pricing at cost.

Left to prove is that every seller pricing at $p^*$ is a Nash eq (NE) of the stage game. First, pricing $p_i>\tau(p_t) = c+\eta$ provides zero profit to a seller because the seller is not part of the displayed set of sellers. Similarly, pricing $p_i=c$ provides zero profit.  
Now dropping the time period $t$, because we study a generic stage game, and considering prices $p =(p_1,\ldots,p_n) \in \times_{i=1}^n (c, c+\eta]$, {so that all sellers are displayed},  
and with seller profit $\rho_{i}(p; \mathcal{N})$ for  price profile $p$, we have  
\begin{equation*}
	\frac{\partial}{\partial p_{i}} \rho_{i}(p; \mathcal{N}) = - (p_{i} - c) \frac{D_i(p; \mathcal{N})(1-D_i(p; \mathcal{N}))}{\mu} + D_i(p; \mathcal{N}).  
\end{equation*}
By first-order optimality conditions, we have	$\frac{\partial}{\partial p_{i}}\rho_{i}(p; \mathcal{N})=0$, for each $i$, only when $p_i = \hat p$,  for some $\hat p>c$~\citep{anderson1992logit}. Furthermore, $\rho_{i}(p_{i},p_{-i}; \mathcal{N})$ is concave. Thus, if $c+\eta\ge \hat p$, there is only one Nash equilibrium of the stage game, where each seller $i$ sets price $p^*=\hat p\le c+\eta$. On the other hand, if $c+\eta< \hat p$, we have that $\rho_{i}(p; \mathcal{N})$ is strictly increasing when $p\in \times_{i=1}^n (c, c+\eta]$, and  there is a unique Nash equilibrium where each seller quotes price $p^* = c+\eta$. 
\end{proof}
This proposition follows from a platform rule with a limiting threshold that is arbitrarily close to the sellers' cost $c$. Under this rule, sellers are displayed only if their price is minimal.  At the same time, this particular threshold rule is  fragile, and would lead to market failure if seller costs vary in unexpected ways. By letting the threshold $\tau$ also vary with the price profile $p_t$, as is allowed by the  family of threshold platform rules, we seek to learn milder interventions that still mitigate collusion but remain robust to variations in the  costs faced by sellers in the marketplace. 

\section{Experimental Results}
\label{sec:experiments}

In this section, we evaluate our learning approach via three main experiments. We first consider  performance in terms of consumer surplus, benchmarking our RL interventions against the ones introduced by \citet{johnson2020platform}. We demonstrate the ability to learn optimal leader strategies in the Stackelberg game with the followers across all the seeds we tested, significantly outperforming existing interventions. We then train platform rules ``in the wild," i.e., without access to the sellers' private information, and show that this is not crucial for our learning performances. We conclude by testing the robustness of our interventions,  adding price perturbation during training and evaluating the effect on the robustness of our learned platform rules in environments where sellers have different costs from those assumed during training. 

\paragraph{Experimental set-up.} As in \citet{calvano2020artificial} and \citet{johnson2020platform}, we consider settings with two pricing agents with cost $c=1$, quality indexes $\alpha_1=\alpha_2=2$, and $\alpha_0=0$, and we set parameter $\mu = 0.25$ to control horizontal differentiation. 
The seller Q-learning algorithms are also trained using discount factor $\delta=0.95$, exploration rate $\varepsilon_t = e^{-\beta t}$ with $\beta=1e-5$, and learning rate $\alpha = 0.15$.
We also include results for variations of this default setting in Appendix~\ref{app:setting_variation}.

We adopt five  prices choices for the action of each seller. As in \citet{johnson2020platform}, these prices range from 0.95 ({just below} the sellers' cost) to 2.1 (which is above the monopoly price under no intervention). We note that earlier work provided sellers with a choice of fifteen different prices (over a similar range). We need a smaller grid in order to satisfy our computational constraints; earlier work studied  the effect of different, hand-designed platform rules, and did not also use RL for the automated design of suitable platform rules. We also follow the choices of earlier work, and study an economy with two sellers (again, for reasons of computational resources).
%
%
This coarsened price grid allows us to train a platform policy through Stackelberg POMDP for 50 million steps in 18 hours using a single core on a Intel(R) Xeon(R) Platinum 8268 CPU @ 2.90GHz machine.  

\paragraph{Learning algorithm.} To train the platform policy, we start from the A2C algorithm provided by {\em Stable Baselines3}~\citep[][MIT License]{stable-baselines3}. Given that our policy is only rewarded at the end of a Stackelberg POMDP episode, we configure A2C so that neural network parameters are only updated after this reward phase. In this way, we guarantee that policies inducing desired followers' equilibria are properly rewarded.
Furthermore, to reduce variance in sellers' responses due to non-deterministic policy behavior, we maintain an observation-action map throughout each episode. When a new observation is encountered during the episode, the policy chooses an action following the default training behavior and stores this new observation-action pair in the map. We will show the importance of this variation via an ablation study that is presented in Appendix~\ref{app:deterministic_policies}.
Sellers restart the Q-learning process by re-initializing exploration rates every time the platform rules change (i.e., at the beginning of every Stackelberg POMDP episode). We also show how the training approach is robust to different sellers' behavior models in the Appendix, where the sellers restart the learning rate asynchronously, and not necessarily at the beginning of episodes.

\subsection{Platform Learning Performance}\label{sec:learning_performance} 

In this section, we evaluate the performance of our learned platform policies. For this, we train our policies for 50 million steps in total. We set up the Stackelberg POMDP environment using 50k equilibrium steps and 30 reward steps. These parameters are selected using the following criteria: First, we want to make sure that the equilibrium phase of our Stackelberg POMDP is long enough such that sellers learn to best respond to platform policies. At the same time, we want to avoid too long episodes, as this would lead to few updates of the neural network parameters. With 50k steps we have a good trade-off between these two desiderata. Regarding the reward phase, we want to make sure that {this reward is representative of the converged policy reached by Q-learners, and their converged behavior is usually a single price profile or an loop of two or three price profiles}. We adopt 30 reward steps to be conservative.\footnote{Given that our exploration rate $\varepsilon_t$ is still high after 50k equilibrium steps ($\approx$ 0.6), we artificially set it to zero during the reward phase. We note that this is only a feature to speed up computation and does not violate our ``in the wild'' assumptions, as one could achieve the same effect by using longer equilibrium phases and training runs.}

In these initial experiments, we train our policies using the centralized training-decentralized execution paradigm as used for this Stackelberg learning problem by~\citet{brerolearning}, giving the critic network access to the sellers' learning information (i.e., Q-tables and exploration rates). We relax this below in studying the robustness of the computational framework to online training (``in the wild.") 
We consider the following interventions on behalf of the platform designer:

\begin{itemize}
	\item \textit{No intervention:}  Sellers are always displayed, no matter the price they quote. To derive this baseline, we run our Q learning dynamics until convergence (as described in \citet{johnson2020platform}) for each seed and then average the surplus at final strategies.
	\item \textit{PDP:} We test \textit{price-directed prominence}, a platform intervention introduced by \citet{johnson2020platform}. Here, the platform only displays the seller who quotes the lower price (breaking ties at random), thus enhancing competition. As for \textit{no intervention}, we compute the performance of \textit{PDP} by averaging consumer surplus after Q-learning dynamics converge.
	\item \textit{DPDP:} \textit{Dynamic price-directed prominence} is another  intervention introduced by \citet{johnson2020platform}, which also conditions the choice of the (unique) displayed seller on past prices. Under this intervention, quoting prices equal to cost is a subgame perfect equilibrium of the induced game (under suitable discount factors). 
	As for the previous baselines, we compute the performance of \textit{DPDP} by averaging consumer surplus after Q-learning dynamics converge. 
	
	\item \textit{No State RL:} Here we use the Stackelberg POMDP methodology to train a platform policy that does not use prices $p_t$ to determine the threshold at which to admit each seller to the buy box (thus, ``no state").\footnote{This class of policies already includes the  optimal policy described in the proof of Proposition~\ref{prop:no_collusion}.} Here, Q-learning is restarted whenever a Stackelberg POMDP episode begins.	

    \item \textit{No Stackelberg No State RL:} A variation on ``No State RL"  that does not use the Stackelberg POMDP methodology. Rather, the platform and sellers each follow decentralized learning, and the platform receives a consumer surplus reward at every step. Q-learning is restarted after the same number of steps that are used in a Stackelberg POMDP episode.

	\item \textit{State-based RL:} Here we use the Stackelberg POMDP methodology to train a platform policy that sets a threshold at which to admit each seller as a function of the  price profile quoted by the sellers (thus, ``state-based"). This is the full class of threshold platform rules. Here, Q-learning is restarted whenever a Stackelberg POMDP episodes begins.
	
	\item \textit{No Stackelberg State-based RL:}  A variation on ``State-based RL"  that does not use the Stackelberg POMDP methodology. Rather, the platform and sellers each follow decentralized learning, and the platform receives a consumer surplus reward at every step. Q-learning is restarted after the same number of steps that are used in a Stackelberg POMDP episode.
	
\end{itemize}

\begin{figure}[t]
\begin{center}
\subfigure{
\includegraphics[width=5.8cm]{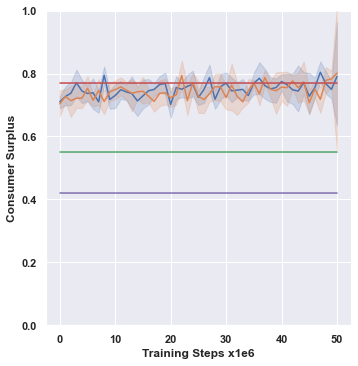}}
\subfigure{
\includegraphics[width=7.6cm]{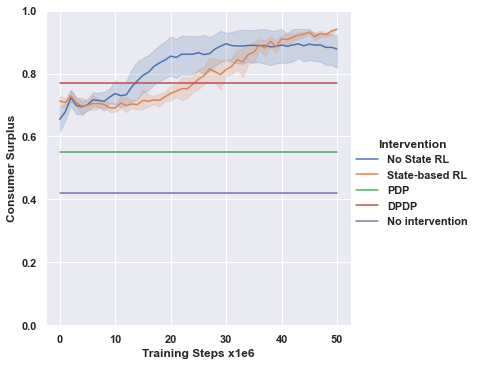}}
\end{center}
\caption{Learning performance of No State RL and State-based RL  compared with different baselines. The results are averaged over 10 runs and shaded regions show 95\% confidence intervals. The No-Stackelberg interventions are displayed on the left, the Stackelberg ones are on the\label{fig:learning_performance2} right.}
\end{figure}
Figure~\ref{fig:learning_performance2} shows the consumer surplus that is realized under these different interventions. 
First, we confirm the results of~\citet{johnson2020platform}, and see consumer surplus improvements from both \textit{PDP} and \textit{DPDP} compared to \textit{No intervention}, with \textit{DPDP} outperforming \textit{PDP}. 
At the same time, the no Stackelberg baselines are not able to outperform \textit{DPDP}, confirming the benefits of using learning methodologies that exploit the leader-follower structure of our game. Indeed, our RL interventions based on the Stackelberg framework dramatically improve consumer surplus, driving it close to its maximal level, which is slightly above 0.94. This is confirmed by the fact that, for both \textit{No State} and \textit{State-based} RL, all sellers are displayed and they invariably quote  minimum prices at the end of training. This is the optimal (i.e., surplus maximizing) seller behavior, confirming  the effectiveness of the Stackelberg-based learning methodology in finding an optimal leader strategy given the Q-learning behavior of sellers. 

\subsection{Learning in the Wild}\label{sec:wild} 
\begin{figure}
\begin{center}
\subfigure{
\includegraphics[width=5.7cm]{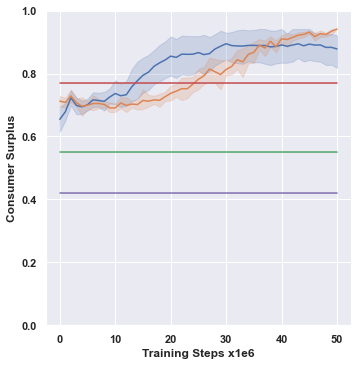}}
\subfigure{
\includegraphics[width=7.7cm]{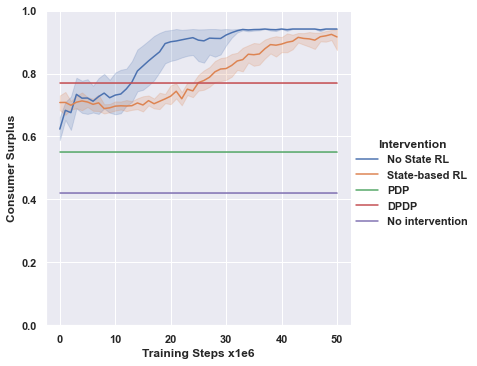}
}
\end{center}
\caption{Offline learning (left) vs.~online, ``learning in the wild" (right) performance. The results are averaged over 10 runs, and the shaded regions show 95\% confidence intervals.}\label{fig:QvsnoQ}
\end{figure}
We now test the performance of the Stackelberg POMDP learning methodology when it has no access to  sellers' private information during training. This can potentially create learning instabilities given that  actor-critic training such as A2C  generally require that the environment accessed by their critic networks is Markovian \citep{grondman2012survey}. Despite this, we find success in this test of ``in the wild" learning. The results are displayed in Figure~\ref{fig:QvsnoQ} and show that, despite relaxing this Markov assumption, the A2C algorithm is able to learn optimal policies for both policy classes (No State and State-based). For No State RL this comes along with lower variance. For State-based RL, the empirical performance is roughly unaffected. We conjecture that the reason behind this good performance is related to the class of threshold platform policies. 
Given a threshold policy, it is possible to predict the overall episode reward based only on the action taken by the policy {(the threshold)} and ignoring the part of the state  that is internal to the sellers (i.e., the Q-matrix and exploration rate). 
Thus, we see empirically that the  additional  information that relates to  sellers' learning can actually make the platform's learning problem harder. 

In Appendix~\ref{app:rewardVariation}, we also demonstrate successful experimental results when we  replace the use of consumer surplus~\eqref{eq:platform_obj} for reward with a reward that corresponds to  the number of agents displayed and the sum of the negated prices offered by sellers. This shows robustness to a possible knowledge gap in knowing the specific functional form of consumer surplus.

\subsection{Robustness of Learned Platform Rules}\label{sec:robustness}

As observed in our previous experiments, the Stackelberg-based RL algorithm is  effective in learning interventions that maximize consumer surplus for a given economic setting. However, as they are tailored to the economic setting at hand, these interventions can perform  poorly when facing settings that are different from those during training. To learn more robust platform rules, we also train with a modified version of the Stackelberg POMDP: at each reward step, with some \textit{random-price probability}, sellers quote prices sampled uniformly at random from the price grid. 
In this way, the platform is  rewarded during training for  performance that remains robust to prices that are not produced by the Q-learning equilibrium dynamics (given seller costs at training). 

We evaluate the effect of adding this perturbation-based robustness to the training procedure in settings with different seller costs: in addition to the default  $c=1.0$, we also test with cost $c=1.38$ (between the second and the third price in our price grid) and  cost $c=1.67$ (between the third and the fourth price in the price grid). Here, and for additional realism, we also continue to train the platform rule  according to the ``in the wild'' approach described above, in Section~\ref{sec:wild}.

As we see in Figure~\ref{fig:robusteness_test}, this  training approach (and in particular with probability 0.4 of random-price perturbation) succeeds in making the state-based policy much more robust in the face of sellers who experience a different cost environment at test time. The robust, state-based policy displays sellers with higher prices (due to their higher costs), while continuing to significantly mitigate collusion when seller costs are as they were during training. This is also confirmed by the policy visualizations in Figure~\ref{fig:policy_visualization}, which show how the buy box learned for State-based RL tends to be much more open under this modified training regime. In contrast, the policy learned by No State RL  performs very poorly (zero consumer surplus) when tested at costs that differ from those assumed during training, and even under this modified training regime. 
There is no single threshold that provides a good compromise between performance at cost $1$ and handling price perturbations. 

\begin{figure}[t] 
\begin{center}
\includegraphics[width=7cm]{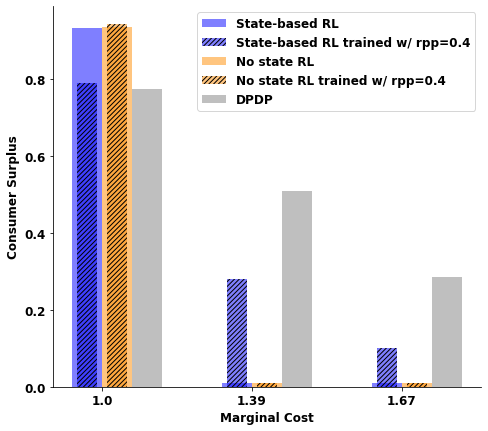}
\end{center}
\caption{{Robustness test}, with   buy box  policy trained without price perturbation and  with  price perturbation with prob.~0.4, averaged over 10 runs. 
}\label{fig:robusteness_test}
\end{figure}

\begin{figure}[h!]
\centering
\subfigure{
\includegraphics[width=6.3cm]{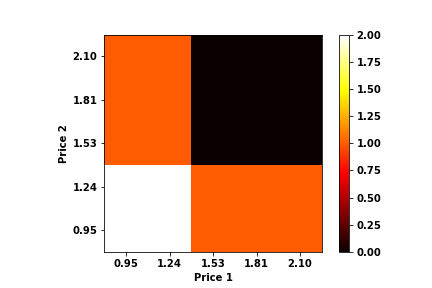}
}\hspace{-1.5cm}
\subfigure{
\includegraphics[width=6.3cm]{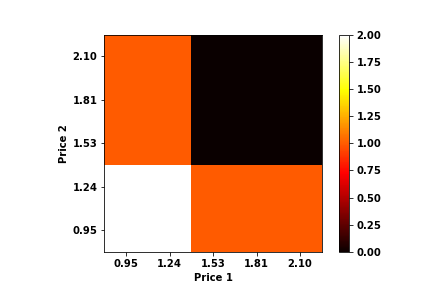}
}\vspace{-0.7cm}
\subfigure{
\includegraphics[width=6.3cm]{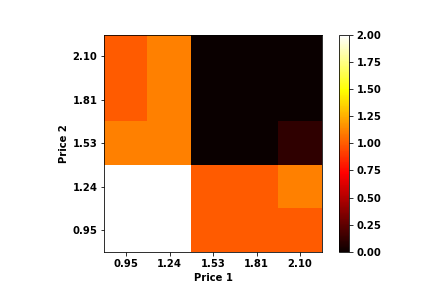}
}\hspace{-1.5cm}
\subfigure{
\includegraphics[width=6.3cm]{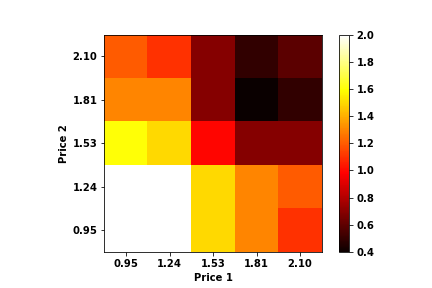}
}
\caption{Policy visualization - number of displayed agents given price selection, averaged over 10 seeds (white--avg.~num.~sellers displayed  2,  black--avg.~num.~sellers displayed 0). (Top-left): No state RL with no price perturbation in the reward step. (Top-right): No state RL with 40\% random price probability in the reward step. (Bottom-left): State-based RL with no price perturbation in the reward step. (Bottom-right): State-based RL with 40\% random price probability in the reward step.}\label{fig:policy_visualization}
\end{figure}

\section{Conclusion}

This work has demonstrated that rules that are effective in preventing collusion by sellers  can be learned through a framework that correctly solves the two-level, Stackelberg problem  (making use of the platform's commitment power). Specifically, we have introduced the class of \textit{threshold policies} that contain policies that optimize consumer surplus and a learning methodology that is effective in learning optimal leader policies in this class. The interventions we learned are shown to substantially outperform the hand-designed interventions introduced in prior work  when the cost environment at test time is as anticipated during training. We also showed how our learned platform interventions can be made more robust when settings are dynamic, with varying seller cost structures, by adopting a  suitably-modified training methodology. This also highlights the importance of the state-based platform rule relative to a no-state rule.
%

Interesting future directions include testing our approach in more complex settings, e.g., when sellers' costs vary between training episodes. In this case, optimal policy actions are based on the prices quoted during the sellers' equilibration phase, as these prices may provide useful information about the current underlying costs (intuitively, the quoted prices will be higher under higher costs). In this scenario, it may be necessary to represent our platform policies via a recurrent neural network, keeping a memory of past prices. 
Finally, we believe that this approach can also be effective in other applications, e.g., to design and understand effective interventions for the electricity markets studied by~\citet{abada2020artificial},   a setting where the successful use of RL  as a defensive response by a platform is not yet established.

\bibliographystyle{ACM-Reference-Format} 
\bibliography{AI_collusion}


\begin{thebibliography}{47}


\ifx \showCODEN    \undefined \def \showCODEN     #1{\unskip}     \fi
\ifx \showDOI      \undefined \def \showDOI       #1{#1}\fi
\ifx \showISBNx    \undefined \def \showISBNx     #1{\unskip}     \fi
\ifx \showISBNxiii \undefined \def \showISBNxiii  #1{\unskip}     \fi
\ifx \showISSN     \undefined \def \showISSN      #1{\unskip}     \fi
\ifx \showLCCN     \undefined \def \showLCCN      #1{\unskip}     \fi
\ifx \shownote     \undefined \def \shownote      #1{#1}          \fi
\ifx \showarticletitle \undefined \def \showarticletitle #1{#1}   \fi
\ifx \showURL      \undefined \def \showURL       {\relax}        \fi
\providecommand\bibfield[2]{#2}
\providecommand\bibinfo[2]{#2}
\providecommand\natexlab[1]{#1}
\providecommand\showeprint[2][]{arXiv:#2}

\bibitem[\protect\citeauthoryear{Abada and Lambin}{Abada and Lambin}{2020}]%
        {abada2020artificial}
\bibfield{author}{\bibinfo{person}{Ibrahim Abada} {and} \bibinfo{person}{Xavier
  Lambin}.} \bibinfo{year}{2020}\natexlab{}.
\newblock \showarticletitle{Artificial Intelligence: Can Seemingly Collusive
  Outcomes Be Avoided?}
\newblock \bibinfo{journal}{\emph{Available at SSRN 3559308}}
  (\bibinfo{year}{2020}).
\newblock


\bibitem[\protect\citeauthoryear{Anderson and De~Palma}{Anderson and
  De~Palma}{1992}]%
        {anderson1992logit}
\bibfield{author}{\bibinfo{person}{Simon~P Anderson} {and}
  \bibinfo{person}{Andre De~Palma}.} \bibinfo{year}{1992}\natexlab{}.
\newblock \showarticletitle{The logit as a model of product differentiation}.
\newblock \bibinfo{journal}{\emph{Oxford Economic Papers}}
  \bibinfo{volume}{44}, \bibinfo{number}{1} (\bibinfo{year}{1992}),
  \bibinfo{pages}{51--67}.
\newblock


\bibitem[\protect\citeauthoryear{{Areyan Viqueira}, Cousins, Mohammad, and
  Greenwald}{{Areyan Viqueira} et~al\mbox{.}}{2019}]%
        {ViqueiraCMG19}
\bibfield{author}{\bibinfo{person}{Enrique {Areyan Viqueira}},
  \bibinfo{person}{Cyrus Cousins}, \bibinfo{person}{Yasser Mohammad}, {and}
  \bibinfo{person}{Amy Greenwald}.} \bibinfo{year}{2019}\natexlab{}.
\newblock \showarticletitle{Empirical Mechanism Design: Designing Mechanisms
  from Data}. In \bibinfo{booktitle}{\emph{Proceedings of the Thirty-Fifth
  Conference on Uncertainty in Artificial Intelligence}}
  \emph{(\bibinfo{series}{Proceedings of Machine Learning Research},
  Vol.~\bibinfo{volume}{115})}. \bibinfo{pages}{1094--1104}.
\newblock


\bibitem[\protect\citeauthoryear{Assad, Clark, Ershov, and Xu}{Assad
  et~al\mbox{.}}{2020}]%
        {assad2020algorithmic}
\bibfield{author}{\bibinfo{person}{Stephanie Assad}, \bibinfo{person}{Robert
  Clark}, \bibinfo{person}{Daniel Ershov}, {and} \bibinfo{person}{Lei Xu}.}
  \bibinfo{year}{2020}\natexlab{}.
\newblock \showarticletitle{{Algorithmic Pricing and Competition: Empirical
  Evidence from the German Retail Gasoline Market}}.
\newblock \bibinfo{journal}{\emph{CESifo Working Paper Series}}
  \bibinfo{number}{8521} (\bibinfo{year}{2020}).
\newblock


\bibitem[\protect\citeauthoryear{Boutilier}{Boutilier}{1996}]%
        {boutilier1996planning}
\bibfield{author}{\bibinfo{person}{Craig Boutilier}.}
  \bibinfo{year}{1996}\natexlab{}.
\newblock \showarticletitle{Planning, Learning and Coordination in Multiagent
  Decision Processes}. In \bibinfo{booktitle}{\emph{Proceedings of the 6th
  Conference on Theoretical Aspects of Rationality and Knowledge}}.
  \bibinfo{pages}{195–210}.
\newblock


\bibitem[\protect\citeauthoryear{Brero, Chakrabarti, Eden, Gerstgrasser, Li,
  and Parkes}{Brero et~al\mbox{.}}{2021a}]%
        {brerolearning}
\bibfield{author}{\bibinfo{person}{Gianluca Brero}, \bibinfo{person}{Darshan
  Chakrabarti}, \bibinfo{person}{Alon Eden}, \bibinfo{person}{Matthias
  Gerstgrasser}, \bibinfo{person}{Vincent Li}, {and} \bibinfo{person}{David
  Parkes}.} \bibinfo{year}{2021}\natexlab{a}.
\newblock \showarticletitle{Learning {Stackelberg} Equilibria in Sequential
  Price Mechanisms}. In \bibinfo{booktitle}{\emph{ICML Workshop for
  Reinforcement Learning Theory}}.
\newblock


\bibitem[\protect\citeauthoryear{Brero, Eden, Gerstgrasser, Parkes, and
  Rheingans{-}Yoo}{Brero et~al\mbox{.}}{2021b}]%
        {brero2021reinforcement}
\bibfield{author}{\bibinfo{person}{Gianluca Brero}, \bibinfo{person}{Alon
  Eden}, \bibinfo{person}{Matthias Gerstgrasser}, \bibinfo{person}{David~C.
  Parkes}, {and} \bibinfo{person}{Duncan Rheingans{-}Yoo}.}
  \bibinfo{year}{2021}\natexlab{b}.
\newblock \showarticletitle{Reinforcement Learning of Sequential Price
  Mechanisms}. In \bibinfo{booktitle}{\emph{Thirty-Fifth {AAAI} Conference on
  Artificial Intelligence, {AAAI}}}. \bibinfo{pages}{5219--5227}.
\newblock


\bibitem[\protect\citeauthoryear{Brinkman and Wellman}{Brinkman and
  Wellman}{2017}]%
        {BrinkmanW17}
\bibfield{author}{\bibinfo{person}{Erik Brinkman} {and}
  \bibinfo{person}{Michael~P. Wellman}.} \bibinfo{year}{2017}\natexlab{}.
\newblock \showarticletitle{Empirical Mechanism Design for Optimizing Clearing
  Interval in Frequent Call Markets}. In \bibinfo{booktitle}{\emph{Proceedings
  of the 2017 {ACM} Conference on Economics and Computation}}.
  \bibinfo{pages}{205--221}.
\newblock


\bibitem[\protect\citeauthoryear{B{\"{u}}nz, Lubin, and Seuken}{B{\"{u}}nz
  et~al\mbox{.}}{2018}]%
        {bunz2020designing}
\bibfield{author}{\bibinfo{person}{Benedikt B{\"{u}}nz},
  \bibinfo{person}{Benjamin Lubin}, {and} \bibinfo{person}{Sven Seuken}.}
  \bibinfo{year}{2018}\natexlab{}.
\newblock \showarticletitle{Designing Core-selecting Payment Rules: {A}
  Computational Search Approach}. In \bibinfo{booktitle}{\emph{Proceedings of
  the 2018 {ACM} Conference on Economics and Computation}}.
  \bibinfo{pages}{109}.
\newblock


\bibitem[\protect\citeauthoryear{Calvano, Calzolari, Denicol{\`o}, Harrington,
  and Pastorello}{Calvano et~al\mbox{.}}{2020b}]%
        {calvano2020protecting}
\bibfield{author}{\bibinfo{person}{Emilio Calvano}, \bibinfo{person}{Giacomo
  Calzolari}, \bibinfo{person}{Vincenzo Denicol{\`o}},
  \bibinfo{person}{Joseph~E Harrington}, {and} \bibinfo{person}{Sergio
  Pastorello}.} \bibinfo{year}{2020}\natexlab{b}.
\newblock \showarticletitle{Protecting consumers from collusive prices due to
  AI}.
\newblock \bibinfo{journal}{\emph{Science}} \bibinfo{volume}{370},
  \bibinfo{number}{6520} (\bibinfo{year}{2020}), \bibinfo{pages}{1040--1042}.
\newblock


\bibitem[\protect\citeauthoryear{Calvano, Calzolari, Denicolo, and
  Pastorello}{Calvano et~al\mbox{.}}{2020a}]%
        {calvano2020artificial}
\bibfield{author}{\bibinfo{person}{Emilio Calvano}, \bibinfo{person}{Giacomo
  Calzolari}, \bibinfo{person}{Vincenzo Denicolo}, {and}
  \bibinfo{person}{Sergio Pastorello}.} \bibinfo{year}{2020}\natexlab{a}.
\newblock \showarticletitle{Artificial intelligence, algorithmic pricing, and
  collusion}.
\newblock \bibinfo{journal}{\emph{American Economic Review}}
  \bibinfo{volume}{110}, \bibinfo{number}{10} (\bibinfo{year}{2020}),
  \bibinfo{pages}{3267--97}.
\newblock


\bibitem[\protect\citeauthoryear{Cheng, Zhu, Xin, and Chen}{Cheng
  et~al\mbox{.}}{2017}]%
        {cheng17}
\bibfield{author}{\bibinfo{person}{C. Cheng}, \bibinfo{person}{Z. Zhu},
  \bibinfo{person}{B. Xin}, {and} \bibinfo{person}{C Chen}.}
  \bibinfo{year}{2017}\natexlab{}.
\newblock \showarticletitle{A multi-agent reinforcement learning algorithm
  based on {Stackelberg game}}. In \bibinfo{booktitle}{\emph{6th Data Driven
  Control and Learning Systems (DDCLS)}}. \bibinfo{pages}{727--732}.
\newblock


\bibitem[\protect\citeauthoryear{Curry, Chiang, Goldstein, and Dickerson}{Curry
  et~al\mbox{.}}{2020}]%
        {curry20}
\bibfield{author}{\bibinfo{person}{Michael~J. Curry},
  \bibinfo{person}{Ping{-}Yeh Chiang}, \bibinfo{person}{Tom Goldstein}, {and}
  \bibinfo{person}{John Dickerson}.} \bibinfo{year}{2020}\natexlab{}.
\newblock \showarticletitle{Certifying Strategyproof Auction Networks}. In
  \bibinfo{booktitle}{\emph{Proc. 33rd Annual Conference on Neural Information
  Processing Systems}}.
\newblock


\bibitem[\protect\citeauthoryear{Curry, Lyi, Goldstein, and Dickerson}{Curry
  et~al\mbox{.}}{2022}]%
        {curry22}
\bibfield{author}{\bibinfo{person}{Michael~J. Curry}, \bibinfo{person}{Uro
  Lyi}, \bibinfo{person}{Tom Goldstein}, {and} \bibinfo{person}{John
  Dickerson}.} \bibinfo{year}{2022}\natexlab{}.
\newblock \showarticletitle{Learning Revenue-Maximizing Auctions With
  Differentiable Matching}. In \bibinfo{booktitle}{\emph{Proc. 25th
  International Conference on Artificial Intelligence and Statistics}}.
\newblock
\newblock
\shownote{Forthcoming.}


\bibitem[\protect\citeauthoryear{Duetting, Feng, Narasimhan, Parkes, and
  Ravindranath}{Duetting et~al\mbox{.}}{2019}]%
        {DuettingFNPR19}
\bibfield{author}{\bibinfo{person}{Paul Duetting}, \bibinfo{person}{Zhe Feng},
  \bibinfo{person}{Harikrishna Narasimhan}, \bibinfo{person}{David~C. Parkes},
  {and} \bibinfo{person}{Sai~Srivatsa Ravindranath}.}
  \bibinfo{year}{2019}\natexlab{}.
\newblock \showarticletitle{Optimal Auctions through Deep Learning}. In
  \bibinfo{booktitle}{\emph{Proceedings of the 36th International Conference on
  Machine Learning}}. \bibinfo{pages}{1706--1715}.
\newblock


\bibitem[\protect\citeauthoryear{Fiez, Chasnov, and Ratliff}{Fiez
  et~al\mbox{.}}{2020}]%
        {FiezCR20}
\bibfield{author}{\bibinfo{person}{Tanner Fiez}, \bibinfo{person}{Benjamin
  Chasnov}, {and} \bibinfo{person}{Lillian~J. Ratliff}.}
  \bibinfo{year}{2020}\natexlab{}.
\newblock \showarticletitle{Implicit Learning Dynamics in {Stackelberg Games:
  Equilibria} Characterization, Convergence Analysis, and Empirical Study}. In
  \bibinfo{booktitle}{\emph{Proceedings of the 37th International Conference on
  Machine Learning}}. \bibinfo{pages}{3133--3144}.
\newblock


\bibitem[\protect\citeauthoryear{Fudenberg and Tirole}{Fudenberg and
  Tirole}{1991}]%
        {fudenberg1991game}
\bibfield{author}{\bibinfo{person}{Drew Fudenberg} {and} \bibinfo{person}{Jean
  Tirole}.} \bibinfo{year}{1991}\natexlab{}.
\newblock \bibinfo{booktitle}{\emph{Game theory}}.
\newblock \bibinfo{publisher}{MIT press}.
\newblock


\bibitem[\protect\citeauthoryear{Fujimoto, Meger, Precup, Nachum, and
  Gu}{Fujimoto et~al\mbox{.}}{2022}]%
        {fujimoto2022should}
\bibfield{author}{\bibinfo{person}{Scott Fujimoto}, \bibinfo{person}{David
  Meger}, \bibinfo{person}{Doina Precup}, \bibinfo{person}{Ofir Nachum}, {and}
  \bibinfo{person}{Shixiang~Shane Gu}.} \bibinfo{year}{2022}\natexlab{}.
\newblock \showarticletitle{Why Should I Trust You, Bellman? The Bellman Error
  is a Poor Replacement for Value Error}.
\newblock \bibinfo{journal}{\emph{arXiv preprint arXiv:2201.12417}}
  (\bibinfo{year}{2022}).
\newblock


\bibitem[\protect\citeauthoryear{Grondman, Busoniu, Lopes, and
  Babuska}{Grondman et~al\mbox{.}}{2012}]%
        {grondman2012survey}
\bibfield{author}{\bibinfo{person}{Ivo Grondman}, \bibinfo{person}{Lucian
  Busoniu}, \bibinfo{person}{Gabriel~AD Lopes}, {and} \bibinfo{person}{Robert
  Babuska}.} \bibinfo{year}{2012}\natexlab{}.
\newblock \showarticletitle{A survey of actor-critic reinforcement learning:
  Standard and natural policy gradients}.
\newblock \bibinfo{journal}{\emph{IEEE Transactions on Systems, Man, and
  Cybernetics, Part C (Applications and Reviews)}} \bibinfo{volume}{42},
  \bibinfo{number}{6} (\bibinfo{year}{2012}), \bibinfo{pages}{1291--1307}.
\newblock


\bibitem[\protect\citeauthoryear{Jin, Netrapalli, and Jordan}{Jin
  et~al\mbox{.}}{2020}]%
        {JinNJ20}
\bibfield{author}{\bibinfo{person}{Chi Jin}, \bibinfo{person}{Praneeth
  Netrapalli}, {and} \bibinfo{person}{Michael~I. Jordan}.}
  \bibinfo{year}{2020}\natexlab{}.
\newblock \showarticletitle{What is Local Optimality in Nonconvex-Nonconcave
  Minimax Optimization?}. In \bibinfo{booktitle}{\emph{Proceedings of the 37th
  International Conference on Machine Learning}}, Vol.~\bibinfo{volume}{119}.
  \bibinfo{pages}{4880--4889}.
\newblock


\bibitem[\protect\citeauthoryear{Johnson, Rhodes, and Wildenbeest}{Johnson
  et~al\mbox{.}}{2020}]%
        {johnson2020platform}
\bibfield{author}{\bibinfo{person}{Justin Johnson}, \bibinfo{person}{Andrew
  Rhodes}, {and} \bibinfo{person}{Matthijs~R Wildenbeest}.}
  \bibinfo{year}{2020}\natexlab{}.
\newblock \showarticletitle{Platform design when sellers use pricing
  algorithms}.
\newblock \bibinfo{journal}{\emph{CEPR Discussion Paper No. DP15504}}
  (\bibinfo{year}{2020}).
\newblock


\bibitem[\protect\citeauthoryear{Jordan, Schvartzman, and Wellman}{Jordan
  et~al\mbox{.}}{2010}]%
        {JordanSW10}
\bibfield{author}{\bibinfo{person}{Patrick~R. Jordan},
  \bibinfo{person}{L.~Julian Schvartzman}, {and} \bibinfo{person}{Michael~P.
  Wellman}.} \bibinfo{year}{2010}\natexlab{}.
\newblock \showarticletitle{Strategy exploration in empirical games}. In
  \bibinfo{booktitle}{\emph{9th International Conference on Autonomous Agents
  and Multiagent Systems}}. \bibinfo{pages}{1131--1138}.
\newblock


\bibitem[\protect\citeauthoryear{Kiekintveld and Wellman}{Kiekintveld and
  Wellman}{2008}]%
        {KiekintveldW08}
\bibfield{author}{\bibinfo{person}{Christopher Kiekintveld} {and}
  \bibinfo{person}{Michael~P. Wellman}.} \bibinfo{year}{2008}\natexlab{}.
\newblock \showarticletitle{Selecting strategies using empirical game models:
  an experimental analysis of meta-strategies}. In
  \bibinfo{booktitle}{\emph{7th International Joint Conference on Autonomous
  Agents and Multiagent Systems}}. \bibinfo{pages}{1095--1101}.
\newblock


\bibitem[\protect\citeauthoryear{Kuo, Ostuni, Horishny, Curry, Dooley, Chiang,
  Goldstein, and Dickerson}{Kuo et~al\mbox{.}}{2020}]%
        {kuo20}
\bibfield{author}{\bibinfo{person}{Kevin Kuo}, \bibinfo{person}{Anthony
  Ostuni}, \bibinfo{person}{Elizabeth Horishny}, \bibinfo{person}{Michael~J.
  Curry}, \bibinfo{person}{Samuel Dooley}, \bibinfo{person}{Ping{-}Yeh Chiang},
  \bibinfo{person}{Tom Goldstein}, {and} \bibinfo{person}{John~P. Dickerson}.}
  \bibinfo{year}{2020}\natexlab{}.
\newblock \showarticletitle{ProportionNet: Balancing Fairness and Revenue for
  Auction Design with Deep Learning}.
\newblock \bibinfo{journal}{\emph{CoRR}}  \bibinfo{volume}{abs/2010.06398}
  (\bibinfo{year}{2020}).
\newblock


\bibitem[\protect\citeauthoryear{Li, Wu, Cui, Dong, Fang, and Russell}{Li
  et~al\mbox{.}}{2019}]%
        {li_robust_2019}
\bibfield{author}{\bibinfo{person}{Shihui Li}, \bibinfo{person}{Yi Wu},
  \bibinfo{person}{Xinyue Cui}, \bibinfo{person}{Honghua Dong},
  \bibinfo{person}{Fei Fang}, {and} \bibinfo{person}{Stuart Russell}.}
  \bibinfo{year}{2019}\natexlab{}.
\newblock \showarticletitle{Robust {Multi}-{Agent} {Reinforcement} {Learning}
  via {Minimax} {Deep} {Deterministic} {Policy} {Gradient}}.
\newblock \bibinfo{journal}{\emph{Proceedings of the AAAI Conference on
  Artificial Intelligence}}  \bibinfo{volume}{33} (\bibinfo{date}{July}
  \bibinfo{year}{2019}), \bibinfo{pages}{4213--4220}.
\newblock


\bibitem[\protect\citeauthoryear{Lowe, Wu, Tamar, Harb, Pieter~Abbeel, and
  Mordatch}{Lowe et~al\mbox{.}}{2017}]%
        {lowe2017multi}
\bibfield{author}{\bibinfo{person}{Ryan Lowe}, \bibinfo{person}{Yi~I Wu},
  \bibinfo{person}{Aviv Tamar}, \bibinfo{person}{Jean Harb},
  \bibinfo{person}{OpenAI Pieter~Abbeel}, {and} \bibinfo{person}{Igor
  Mordatch}.} \bibinfo{year}{2017}\natexlab{}.
\newblock \showarticletitle{Multi-agent actor-critic for mixed
  cooperative-competitive environments}.
\newblock \bibinfo{journal}{\emph{Advances in Neural Information Processing
  Systems (NeurIPS)}}  \bibinfo{volume}{30} (\bibinfo{year}{2017}).
\newblock


\bibitem[\protect\citeauthoryear{Mguni, Jennings, Sison, Macua, Ceppi, and
  de~Cote}{Mguni et~al\mbox{.}}{2019}]%
        {MguniJSMCC19}
\bibfield{author}{\bibinfo{person}{David Mguni}, \bibinfo{person}{Joel
  Jennings}, \bibinfo{person}{Emilio Sison}, \bibinfo{person}{Sergio~Valcarcel
  Macua}, \bibinfo{person}{Sofia Ceppi}, {and} \bibinfo{person}{Enrique~Munoz
  de Cote}.} \bibinfo{year}{2019}\natexlab{}.
\newblock \showarticletitle{Coordinating the Crowd: Inducing Desirable
  Equilibria in Non-Cooperative Systems}. In
  \bibinfo{booktitle}{\emph{Proceedings of the 18th International Conference on
  Autonomous Agents and MultiAgent Systems}}. \bibinfo{pages}{386--394}.
\newblock


\bibitem[\protect\citeauthoryear{Mishra, Vasal, and Vishwanath}{Mishra
  et~al\mbox{.}}{2020}]%
        {MishraVV20}
\bibfield{author}{\bibinfo{person}{Rajesh~K. Mishra},
  \bibinfo{person}{Deepanshu Vasal}, {and} \bibinfo{person}{Sriram
  Vishwanath}.} \bibinfo{year}{2020}\natexlab{}.
\newblock \showarticletitle{Model-free Reinforcement Learning for Stochastic
  {Stackelberg} Security Games}. In \bibinfo{booktitle}{\emph{59th {IEEE}
  Conference on Decision and Control}}. \bibinfo{pages}{348--353}.
\newblock


\bibitem[\protect\citeauthoryear{Peri, Curry, Dooley, and Dickerson}{Peri
  et~al\mbox{.}}{2021}]%
        {peri21}
\bibfield{author}{\bibinfo{person}{Neehar Peri}, \bibinfo{person}{Michael~J.
  Curry}, \bibinfo{person}{Samuel Dooley}, {and} \bibinfo{person}{John~P.
  Dickerson}.} \bibinfo{year}{2021}\natexlab{}.
\newblock \showarticletitle{PreferenceNet: {Encoding} Human Preferences in
  Auction Design with Deep Learning}. In \bibinfo{booktitle}{\emph{Proceedings
  of the 35th Conference on Neural Information Processing Systems}}.
\newblock


\bibitem[\protect\citeauthoryear{Raffin, Hill, Gleave, Kanervisto, Ernestus,
  and Dormann}{Raffin et~al\mbox{.}}{2021}]%
        {stable-baselines3}
\bibfield{author}{\bibinfo{person}{Antonin Raffin}, \bibinfo{person}{Ashley
  Hill}, \bibinfo{person}{Adam Gleave}, \bibinfo{person}{Anssi Kanervisto},
  \bibinfo{person}{Maximilian Ernestus}, {and} \bibinfo{person}{Noah Dormann}.}
  \bibinfo{year}{2021}\natexlab{}.
\newblock \showarticletitle{Stable-Baselines3: Reliable Reinforcement Learning
  Implementations}.
\newblock \bibinfo{journal}{\emph{Journal of Machine Learning Research}}
  \bibinfo{volume}{22}, \bibinfo{number}{268} (\bibinfo{year}{2021}),
  \bibinfo{pages}{1--8}.
\newblock


\bibitem[\protect\citeauthoryear{Rahme, Jelassi, Bruna, and Weinberg}{Rahme
  et~al\mbox{.}}{2021b}]%
        {rahmeJBW21}
\bibfield{author}{\bibinfo{person}{Jad Rahme}, \bibinfo{person}{Samy Jelassi},
  \bibinfo{person}{Joan Bruna}, {and} \bibinfo{person}{S.~Matthew Weinberg}.}
  \bibinfo{year}{2021}\natexlab{b}.
\newblock \showarticletitle{A Permutation-Equivariant Neural Network
  Architecture For Auction Design}. In \bibinfo{booktitle}{\emph{Proc.
  Thirty-Fifth {AAAI} Conference on Artificial Intelligence}}.
  \bibinfo{pages}{5664--5672}.
\newblock


\bibitem[\protect\citeauthoryear{Rahme, Jelassi, and Weinberg}{Rahme
  et~al\mbox{.}}{2021a}]%
        {rahme21}
\bibfield{author}{\bibinfo{person}{Jad Rahme}, \bibinfo{person}{Samy Jelassi},
  {and} \bibinfo{person}{S.~Matthew Weinberg}.}
  \bibinfo{year}{2021}\natexlab{a}.
\newblock \showarticletitle{Auction Learning as a Two-Player Game}. In
  \bibinfo{booktitle}{\emph{Proc. 9th International Conference on Learning
  Representations, {ICLR}}}.
\newblock


\bibitem[\protect\citeauthoryear{Sengupta and Kambhampati}{Sengupta and
  Kambhampati}{2020}]%
        {sengupta_multi-agent_2020}
\bibfield{author}{\bibinfo{person}{Sailik Sengupta} {and}
  \bibinfo{person}{Subbarao Kambhampati}.} \bibinfo{year}{2020}\natexlab{}.
\newblock \showarticletitle{Multi-agent {Reinforcement} {Learning} in
  {Bayesian} {Stackelberg} {Markov} {Games} for {Adaptive} {Moving} {Target}
  {Defense}}.
\newblock \bibinfo{journal}{\emph{arXiv:2007.10457 [cs]}} (\bibinfo{date}{July}
  \bibinfo{year}{2020}).
\newblock
\newblock
\shownote{arXiv: 2007.10457.}


\bibitem[\protect\citeauthoryear{Shen, Peng, Liu, Zhang, Qian, Hong, Guo, Ding,
  Lu, and Tang}{Shen et~al\mbox{.}}{2020}]%
        {shen2020reinforcement}
\bibfield{author}{\bibinfo{person}{Weiran Shen}, \bibinfo{person}{Binghui
  Peng}, \bibinfo{person}{Hanpeng Liu}, \bibinfo{person}{Michael Zhang},
  \bibinfo{person}{Ruohan Qian}, \bibinfo{person}{Yan Hong},
  \bibinfo{person}{Zhi Guo}, \bibinfo{person}{Zongyao Ding},
  \bibinfo{person}{Pengjun Lu}, {and} \bibinfo{person}{Pingzhong Tang}.}
  \bibinfo{year}{2020}\natexlab{}.
\newblock \showarticletitle{Reinforcement mechanism design: With applications
  to dynamic pricing in sponsored search auctions}. In
  \bibinfo{booktitle}{\emph{Proceedings of the AAAI Conference on Artificial
  Intelligence}}, Vol.~\bibinfo{volume}{34}. \bibinfo{pages}{2236--2243}.
\newblock


\bibitem[\protect\citeauthoryear{Shen, Tang, and Zuo}{Shen
  et~al\mbox{.}}{2019}]%
        {STZ18}
\bibfield{author}{\bibinfo{person}{W. Shen}, \bibinfo{person}{P. Tang}, {and}
  \bibinfo{person}{S. Zuo}.} \bibinfo{year}{2019}\natexlab{}.
\newblock \showarticletitle{Automated Mechanism Design via Neural Networks}. In
  \bibinfo{booktitle}{\emph{Proceedings of the 18th International Conference on
  Autonomous Agents and Multiagent Systems}}.
\newblock


\bibitem[\protect\citeauthoryear{Shi, Yu, Wang, Wang, Zhang, Lai, and An}{Shi
  et~al\mbox{.}}{2020}]%
        {ShiYWWZLA20}
\bibfield{author}{\bibinfo{person}{Zhenyu Shi}, \bibinfo{person}{Runsheng Yu},
  \bibinfo{person}{Xinrun Wang}, \bibinfo{person}{Rundong Wang},
  \bibinfo{person}{Youzhi Zhang}, \bibinfo{person}{Hanjiang Lai}, {and}
  \bibinfo{person}{Bo An}.} \bibinfo{year}{2020}\natexlab{}.
\newblock \showarticletitle{Learning Expensive Coordination: An Event-Based
  Deep {RL} Approach}. In \bibinfo{booktitle}{\emph{8th International
  Conference on Learning Representations}}.
\newblock


\bibitem[\protect\citeauthoryear{Shu and Tian}{Shu and Tian}{2019}]%
        {ShuT19}
\bibfield{author}{\bibinfo{person}{Tianmin Shu} {and} \bibinfo{person}{Yuandong
  Tian}.} \bibinfo{year}{2019}\natexlab{}.
\newblock \showarticletitle{M{\^{}}3RL: Mind-aware Multi-agent Management
  Reinforcement Learning}. In \bibinfo{booktitle}{\emph{7th International
  Conference on Learning Representations}}.
\newblock


\bibitem[\protect\citeauthoryear{Tacchetti, Strouse, Garnelo, Graepel, and
  Bachrach}{Tacchetti et~al\mbox{.}}{2019}]%
        {tacchetti19}
\bibfield{author}{\bibinfo{person}{Andrea Tacchetti}, \bibinfo{person}{DJ
  Strouse}, \bibinfo{person}{Marta Garnelo}, \bibinfo{person}{Thore Graepel},
  {and} \bibinfo{person}{Yoram Bachrach}.} \bibinfo{year}{2019}\natexlab{}.
\newblock \showarticletitle{A Neural Architecture for Designing Truthful and
  Efficient Auctions}.
\newblock \bibinfo{journal}{\emph{CoRR}}  \bibinfo{volume}{1907.05181}
  (\bibinfo{year}{2019}).
\newblock


\bibitem[\protect\citeauthoryear{Tang}{Tang}{2017}]%
        {tang2017reinforcement}
\bibfield{author}{\bibinfo{person}{Pingzhong Tang}.}
  \bibinfo{year}{2017}\natexlab{}.
\newblock \showarticletitle{Reinforcement mechanism design.}. In
  \bibinfo{booktitle}{\emph{Proc. 26th Int Joint Conf. on Art. Intell.
  (IJCAI)}}. \bibinfo{pages}{5146--5150}.
\newblock


\bibitem[\protect\citeauthoryear{Tharakunnel and Bhattacharyya}{Tharakunnel and
  Bhattacharyya}{2007}]%
        {thara07}
\bibfield{author}{\bibinfo{person}{K. Tharakunnel} {and} \bibinfo{person}{S.
  Bhattacharyya}.} \bibinfo{year}{2007}\natexlab{}.
\newblock \showarticletitle{Leader-follower semi-markov decision problems:
  theoretical framework and approximate solution}. In
  \bibinfo{booktitle}{\emph{IEEE International Symposium on Approximate Dynamic
  Programming and Reinforcement Learning}}. \bibinfo{pages}{111--118}.
\newblock


\bibitem[\protect\citeauthoryear{{The Organisation for Economic Co-operation
  and Development}}{{The Organisation for Economic Co-operation and
  Development}}{2017}]%
        {OECD}
\bibfield{author}{\bibinfo{person}{{The Organisation for Economic Co-operation
  and Development}}.} \bibinfo{year}{2017}\natexlab{}.
\newblock \bibinfo{title}{Algorithms and Collusion-- Note from the European
  Union}.
\newblock
\newblock
\urldef\tempurl%
\url{www.oecd.org/competition/algorithms-and-collusion.htm}
\showURL{%
\tempurl}


\bibitem[\protect\citeauthoryear{{U.S. Federal Trade Commission}}{{U.S. Federal
  Trade Commission}}{2018}]%
        {FTC}
\bibfield{author}{\bibinfo{person}{{U.S. Federal Trade Commission}}.}
  \bibinfo{year}{2018}\natexlab{}.
\newblock \bibinfo{title}{\emph{The Competition and Consumer Protection Issues
  of Algorithms, Artificial Intelligence, and Predictive Analytics}, Hearing on
  Competition and Consumer Protection in the 21st Century, U.S. Federal Trade
  Commission}.
\newblock
\newblock
\urldef\tempurl%
\url{https://www.ftc.gov/policy/hearings-competition-consumer-protection}
\showURL{%
\tempurl}


\bibitem[\protect\citeauthoryear{Vorobeychik, Kiekintveld, and
  Wellman}{Vorobeychik et~al\mbox{.}}{2006}]%
        {VorobeychikKW06}
\bibfield{author}{\bibinfo{person}{Yevgeniy Vorobeychik},
  \bibinfo{person}{Christopher Kiekintveld}, {and} \bibinfo{person}{Michael~P.
  Wellman}.} \bibinfo{year}{2006}\natexlab{}.
\newblock \showarticletitle{Empirical mechanism design: methods, with
  application to a supply-chain scenario}. In
  \bibinfo{booktitle}{\emph{Proceedings 7th {ACM} Conference on Electronic
  Commerce (EC-2006)}}. \bibinfo{pages}{306--315}.
\newblock


\bibitem[\protect\citeauthoryear{Wang, Xu, Perrault, Reiter, and Tambe}{Wang
  et~al\mbox{.}}{2022}]%
        {wang22}
\bibfield{author}{\bibinfo{person}{Kai Wang}, \bibinfo{person}{Lily Xu},
  \bibinfo{person}{Andrew Perrault}, \bibinfo{person}{Michael~K. Reiter}, {and}
  \bibinfo{person}{Milind Tambe}.} \bibinfo{year}{2022}\natexlab{}.
\newblock \showarticletitle{Coordinating Followers to Reach Better Equilibria:
  End-to-End Gradient Descent for Stackelberg Games}. In
  \bibinfo{booktitle}{\emph{AAAI Conference on Artificial Intelligence}}.
\newblock


\bibitem[\protect\citeauthoryear{Wellman}{Wellman}{2006}]%
        {Wellman06}
\bibfield{author}{\bibinfo{person}{Michael~P. Wellman}.}
  \bibinfo{year}{2006}\natexlab{}.
\newblock \showarticletitle{Methods for Empirical Game-Theoretic Analysis}. In
  \bibinfo{booktitle}{\emph{Proceedings, The Twenty-First National Conference
  on Artificial Intelligence}}. \bibinfo{pages}{1552--1556}.
\newblock


\bibitem[\protect\citeauthoryear{Xu, Perrault, Fang, Chen, and Tambe}{Xu
  et~al\mbox{.}}{2021}]%
        {xu2021robust}
\bibfield{author}{\bibinfo{person}{Lily Xu}, \bibinfo{person}{Andrew Perrault},
  \bibinfo{person}{Fei Fang}, \bibinfo{person}{Haipeng Chen}, {and}
  \bibinfo{person}{Milind Tambe}.} \bibinfo{year}{2021}\natexlab{}.
\newblock \showarticletitle{Robust Reinforcement Learning Under Minimax Regret
  for Green Security}. In \bibinfo{booktitle}{\emph{Proc.~37th Conference on
  Uncertainty in Artifical Intelligence (UAI-21)}}.
\newblock


\bibitem[\protect\citeauthoryear{Zheng, Trott, Srinivasa, Parkes, and
  Socher}{Zheng et~al\mbox{.}}{2022}]%
        {zheng2022}
\bibfield{author}{\bibinfo{person}{Stephan Zheng}, \bibinfo{person}{Alexander
  Trott}, \bibinfo{person}{Sunil Srinivasa}, \bibinfo{person}{David~C Parkes},
  {and} \bibinfo{person}{Richard Socher}.} \bibinfo{year}{2022}\natexlab{}.
\newblock \showarticletitle{The AI Economist: Optimal Economic Policy Design
  via Two-level Deep Reinforcement Learning}.
\newblock \bibinfo{journal}{\emph{Science Advances}}  \bibinfo{volume}{8}
  (\bibinfo{year}{2022}).
\newblock


\end{thebibliography}


\newpage

\appendix

\section{Q Learning}\label{app:Qlearning}
Q learning is an RL algorithm that learns an  estimate of the \emph{action-value function} $Q^*(s, a)$. This action-value function gives the expected total reward of taking action $a$ at state $s$ and  using the optimal policy function in the future.  Once $Q^*(s, a)$ has been learned, the  optimal policy is 
\begin{equation}
\pi^*(s) \in \text{arg max}_{a\in A} Q^*(s, a).
\end{equation}

A Q-learning algorithm maintains an $|S|\times |A|$ Q-matrix, $Q_t$, representing {the estimate of} $Q^*(s, a)$ at step $t$. Usually, this matrix is randomly initialized. At each step $t$, the algorithm takes 
action $a_t$ that with probability $1 - \varepsilon_t$ is  optimal according to its current Q-matrix $Q_t$, and  with probability $\varepsilon_t$   chosen uniformly at random from the set of avalilable actions. {We call $\varepsilon_t$ the {\em exploration rate}.} The entry $(s_t,a_t)$ of $Q_t$ is  updated based on feedback via a convex combination of its previous value and the  reward $r(s_t,a_t)$ attained from the action  plus the discounted value of the state $s_{t+1}$:
\begin{equation}
Q_{t+1}(s_t,a_t) = (1 - \alpha)Q_t(s_t,a_t) + \alpha [r(s_t,a_t) + \delta \max_{a\in A} Q_t(s_{t+1},a)].
\end{equation}

Parameter $\alpha\in [0,1]$ is the \textit{learning rate}. When the environment is stationary and Markovian, and under suitable assumptions on the learning rate and exploration rate, the Q-matrix is guaranteed to converge in the limit to the action-value function $Q^*(s, a)$ and the policy to the optimal policy.

\section{Variations on Setting Parameters}\label{app:setting_variation}
\begin{figure}[h]
\begin{center}
\subfigure{
\includegraphics[width=5.8cm]{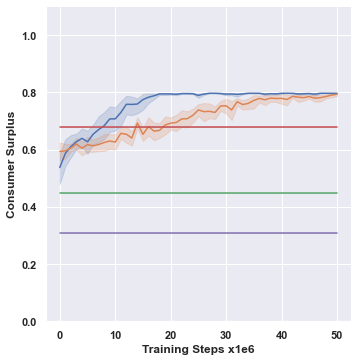}
}
\subfigure{
\includegraphics[width=7.8cm]{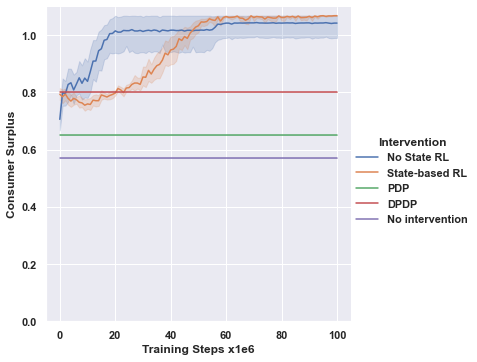}
}
\end{center}
\caption{``Learning in the wild'' performance of No State RL and State-based RL with different environment settings when (left) the horizontal differentiation $\mu$ is lowered from 0.25 to 0.05, and when (right) $\mu$ is increased from 0.25 to 0.40. The results are averaged over 10 runs, and the shaded regions show 95\% confidence intervals.\label{fig:vary-settings}}
\end{figure}
Following \citet{calvano2020artificial} and \citet{johnson2020platform}, we also consider the learning performance of our RL algorithms in settings with different horizontal differentiation parameter $\mu$. Specifically, Figure~\ref{fig:vary-settings} displays our training curves when the horizontal differentiation is lowered from 0.25 to 0.05 and increased from 0.25 to 0.40, and our policies are learned ``in the wild.'' We notice that, in both scenarios, our RL policies bring consumer welfare close to its optimal level (0.8 when $\mu=0.05$ and 1.08 when $\mu=0.4$) outperforming all the baselines.

\section{Deterministic Policies During Stackelberg POMDP Episodes}\label{app:deterministic_policies}
\begin{figure}[h]
\begin{center}
\subfigure{
\includegraphics[width=5.8cm]{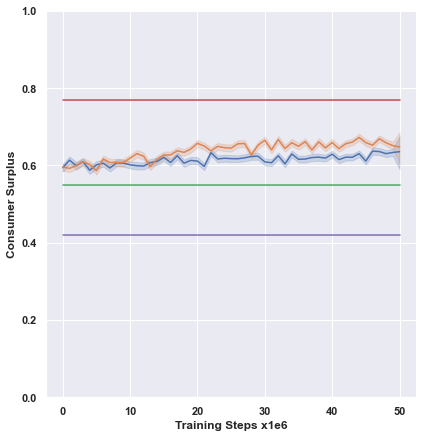}
}
\subfigure{
\includegraphics[width=7.6cm]{"Figures/Stack_wild_performance"}
}
\end{center}
\caption{``Learning in the wild" performance of No State RL and State-based RL when policies are non-deterministic (left) and deterministic (right) during Stackelberg POMDP episodes. The results are averaged over 10 runs and shaded regions show 95\% confidence intervals. \label{fig:nonDetPolicies}}
\end{figure}
In this section, we test the learning performance of our RL algorithms when policies are not deterministic during the Stackelberg POMDP episodes. As we can see from Figure~\ref{fig:nonDetPolicies}, this variation dramatically affects our learning performance, which only slightly improves during training and leads to final policies that do not outperform our baselines. As discussed in Section~\ref{sec:experiments}, this poor performance is caused by the sellers not being able to learn optimal response strategies due to the high variance introduced by the non-deterministic behavior of our policies.

\section{Variations on Sellers' Learning Behavior}
\begin{figure}[h]
\begin{center}
\includegraphics[width=9cm]{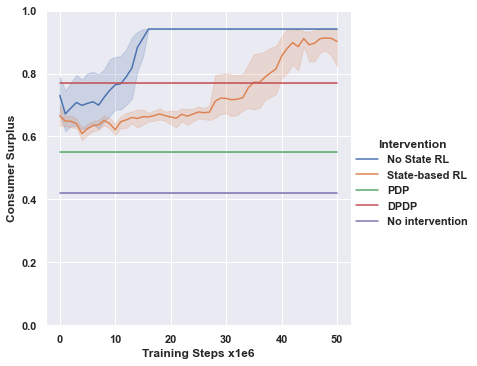}
\end{center}
\caption{``Learning in the wild" performance of No State RL and State-based RL when sellers restart their exploration asynchronously during Stackelberg POMDP episodes. The results are averaged over 10 runs and shaded regions show 95\% confidence intervals. 
\label{fig:asyncRestart}}
\end{figure}
In our previous experiments, we assumed that both sellers restart their learning processes any time the platform rules change. This is consistent with the original experiments run by~\citet{calvano2020artificial}, which demonstrated  seller collusive behavior. However, this assumption may not hold in real-world settings, where sellers can restart their learning processes asynchronously and at any time. This behavior can present new challenges to learning an effective platform policy. Indeed, in this scenario, changes in the sellers' behavior may  be caused not only by different platform interventions, but also as a result of learning restarts. 

In this section, we evaluate the performance of the Stackelberg POMDP framework in scenarios where sellers randomly restart their exploration rate during training. Specifically, we assume that, at each step of the platform's learning process, each seller restarts its exploration rate with some probability. We set this probability such that, in expectation, each seller restarts its exploration once per Stackelberg POMDP episode (which corresponds to the number of steps between platform updates). {A particular concern is that this behavior may result in effectively random prices if a restart occurs close to the nominal reward phase of the Stackelberg POMDP episode. A sensible response of a platform to this would be to monitor sellers' prices, and isolate stages when price profiles are more stable to audit rewards. For a stylized version of this, we allow sellers' exploration to restart as outlined above, but pause any exploration during the reward phase of the Stackelberg POMDP.} Furthermore, if restarts occur close to the reward phases, rewards may reflect an out-of-equilibrium behavior of the sellers (even if exploration is paused). To avoid this problem, we generate our plots by logging rewards in an evaluation Stackelberg POMDP episode we run every 100k training steps. These evaluation episodes use the current platform policy and operate it executing the action with the highest weight given each observation. In these episodes, the Q learning processes are run as in the previous sections, without intra-episode restarts.

As we can see from Figure~\ref{fig:asyncRestart}, the Stackelberg learning framework allows us to derive close-to-optimal policies even under this less stationary behavior. Given that rewards are collected in evaluation episodes (where policies are operated via highest-weighted actions), the optimal intervention under No State RL is executed much earlier than in the simulations of Figure~\ref{fig:QvsnoQ}, reaching the maximum reward after only 15M training steps. 

\section{Variations on Policy Rewards}\label{app:rewardVariation}
In our main experiments, we assume that the platform can compute the consumer surplus $U(p_t; \mathcal{N}_t) = \mu \cdot \log[\lambda(p_t;\mathcal N_t)]$, where $\lambda(p_t;\mathcal N_t) = \sum_{j\in \mathcal N_t} \mathrm{exp}((\alpha_j-p_{j,t})/\mu) +  \mathrm{exp}(\alpha_0/\mu)$ at each step $t$ to reward the policy. We note that, however, to compute this surplus one needs to access the consumers' quality indexes $\alpha_i$s, which may not be available to the platform. However, we note that, to maximize consumer surplus, we can replace $U(p_t; \mathcal{N}_t)$ with any reward function that increases with respect to the number of agents displayed and decreases as prices increase. In this section, we will use $\tilde U(p_t; \mathcal{N}_t) = \mu \cdot \log[\tilde \lambda(p_t;\mathcal N_t)]$, where $\tilde \lambda(p_t;\mathcal N_t) = \sum_{j\in \mathcal N_t} \mathrm{exp}(-p_{j,t}/\mu)$. As we can see from Figure~\ref{fig:varPolicyReward}, our Stackelberg training framework can derive optimal policies even under these modified rewards.
\begin{figure}
\begin{center}
\includegraphics[width=9cm]{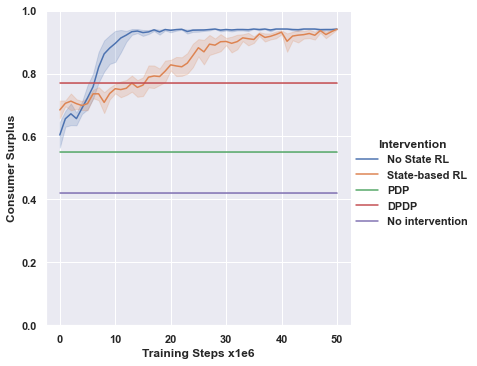}
\caption{Learning performance under reward that does not access quality indexes. \label{fig:varPolicyReward}}
\end{center}
\end{figure}
 
\end{document}